\newtheorem{theorem}{Theorem}
\newtheorem{dfn}{Definition}
\newtheorem{expl}{Example}
\newcommand{\msat}{MAX-3SAT }
\newcommand{\tsat}{3SAT}
\newcommand{\pq}{Pattern QUBO }
\newcommand{\sota}{state-of-the-art}
\def\BibTeX{{\rm B\kern-.05em{\sc i\kern-.025em b}\kern-.08em
    T\kern-.1667em\lower.7ex\hbox{E}\kern-.125emX}}
\begin{document}

\title{Solving Max-3SAT Using QUBO Approximation\\

}

\author{\IEEEauthorblockN{Sebastian Zielinski}
\IEEEauthorblockA{\textit{Institute for Informatics} \\
\textit{LMU Munich}\\
Munich, Germany \\
sebastian.zielinski@ifi.lmu.de}
\and
\IEEEauthorblockN{Jonas Nüßlein}
\IEEEauthorblockA{\textit{Institute for Informatics} \\
\textit{LMU Munich}\\
Munich, Germany \\
jonas.nuesslein@ifi.lmu.de}
\and
\IEEEauthorblockN{Michael Kölle}
\IEEEauthorblockA{\textit{Institute for Informatics} \\
\textit{LMU Munich}\\
Munich, Germany \\
michael.koelle@ifi.lmu.de}
\and
\IEEEauthorblockN{Thomas Gabor}
\IEEEauthorblockA{\textit{Institute for Informatics} \\
\textit{LMU Munich}\\
Munich, Germany \\
thomas.gabor@ifi.lmu.de}
\and
\IEEEauthorblockN{Claudia Linnhoff-Popien}
\IEEEauthorblockA{\textit{Institute for Infomatics} \\
\textit{LMU Munich}\\
Munich, Germany \\
linnhoff@ifi.lmu.de}
\and
\IEEEauthorblockN{Sebastian Feld}
\IEEEauthorblockA{\textit{Quantum \& Computer Engineering} \\
\textit{Delft University of Technology}\\
Delft, The Netherlands \\
s.feld@tudelft.nl}
}

\maketitle
\begin{abstract}
As contemporary quantum computers do not possess error correction, any calculation performed by these devices can be considered an involuntary approximation. To solve a problem on a quantum annealer, it has to be expressed as an instance of Quadratic Unconstrained Binary Optimization (QUBO). In this work, we thus study whether systematically approximating QUBO representations of the \msat problem can improve the solution quality when solved on contemporary quantum hardware, compared to using exact, non-approximated QUBO representations. For a \msat instance consisting of a \tsat~ formula with  $n$ variables and $m$ clauses, we propose a method of systematically creating approximate QUBO representations of dimension ($n\times n)$, which is significantly smaller than the QUBO matrices of any exact, non-approximated \msat QUBO transformation.  In an empirical evaluation, we demonstrate that using our QUBO approximations for solving \msat problems on D-Wave's quantum annealer Advantage\_System6.4 can yield better results than using \sota~ exact QUBO transformations. Furthermore, we demonstrate that using naive QUBO approximation methods, based on removing values from exact $(n+m)\times(n+m)$-dimensional QUBO representations of \msat instances, is ineffective.
\end{abstract}

\begin{IEEEkeywords}
Quadratic Unconstrained Binary Optimization, Combinatorial Optimization, Max-3SAT, Approximation
\end{IEEEkeywords}

\section{Introduction}
Satisfiability problems are at the core of many theoretical and practical computer science applications. In theoretical computer science, they are often used to prove the hardness of other problem classes. In practical domains, they are used to solve problems in the verification of integrated circuits \cite{marques2000boolean}, planning \cite{rintanen2006planning,duan2012provable}, dependency resolution \cite{abate2020dependency}, and many more. Given a Boolean formula, the satisfiability problem is concerned with deciding whether a satisfying assignment of Boolean values to the formula's variables exists. The optimization version of the satisfiability problem is called the MAX-SAT problem. The MAX-SAT problem also consists of a Boolean formula, but the goal is to find an assignment that satisfies as many clauses as possible. As there is a polynomial-time transformation from any satisfiability problem to a \tsat{} problem, which is a satisfiability problem in which each clause consists of at most 3 literals, often \tsat{} problems are studied as a canonical representative of satisfiability problems. \\
Quantum computing is a computational paradigm that promises to speed up the search for solutions to certain problems (e.g., Shor's algorithm \cite{shor1994algorithms}, Grover's algorithm \cite{grover1996fast}). With the recent improvements in the availability and capabilities of quantum hardware systems, researchers began to explore the possibilities of solving \msat problems on quantum computers. One area of research studies methods of transforming \msat problems to instances of Quadratic Unconstrained Binary Optimization (QUBO), as QUBO is the input format for contemporary quantum annealers and for QAOA algorithms on quantum gate systems. In the last decade, numerous methods have been proposed to transform instances of \msat problems into instances of QUBO \cite{nusslein2023solving,chancellor2016direct,choi2010adiabatic,bian2020solving,zielinski23Pattern,verma2021efficient}. It has also been shown that choosing different transformations can significantly impact the solution quality \cite{zielinski2023influence, kruger2020quantum}. Consequently, finding new QUBO transformations that potentially increase a solver's capability of finding high-quality solutions for \msat problems is an active field of research. \\
When using contemporary quantum computers as QUBO solvers, one must consider a major challenge these devices currently face: missing error correction. As these devices currently do not possess error correction, calculations on these devices suffer from the influence of noise. Thus, it can be argued that any calculation of a contemporary quantum computer is an involuntary approximation of a given input problem. Hence, as the subtleties of a QUBO instance are lost during the calculation process due to the influence of noise, evaluating the possibilities of purposefully creating QUBO approximations of given problems as an input to quantum computers seems interesting. \\
To express a given problem as an instance of QUBO often requires many auxiliary variables to precisely express all the constraints and objectives of a real-world problem as a quadratic problem. Hence, the idea of a QUBO approximation is to create a simpler quadratic representation of a real-world problem that, when minimized, yields good (enough) solutions to the real-world problem. The goal of these simplifications is to decrease the QUBO size or its density (i.e., the dependencies between different variables). A smaller (resp. less dense) QUBO generally needs fewer physical qubits on quantum hardware. Furthermore, it seems reasonable to assume that reducing the dependencies between different variables might reduce the severity of local calculation errors introduced by noise. Thus, the question we want to address in this paper is: Considering the influence of noise on calculations of contemporary quantum hardware, can solving purposefully crafted QUBO approximations of \msat problems yield comparable or even better results than solving exact, non-approximated QUBO representations of \msat problems on contemporary quantum hardware? In other words, can we craft QUBO approximations for \msat instances such that the error we introduce by approximating the problem does not exceed the error the noise introduces into quantum calculations when solving larger, exact QUBO representations of the same instances? \\
While there are several studies concerning the question of creating QUBO approximations for given problems (e.g. \cite{nusslein2023black,matsumori2022application}), the motivation for these studies is that either the objective function is unknown and thus the QUBO representation is approximated using a black-box approach, or that the constraints cannot be expressed precisely as a quadratic optimization problem and thus need approximation. Our work, however, is motivated by the study of Sax et al. \cite{sax2020approximate}. In their study, the authors transformed satisfiability problems into instances of QUBO using Choi's method \cite{choi2010adiabatic}. Then, they employed a simple QUBO approximation method, namely removing entries from a given QUBO matrix, and observed that approximately 70\% of the entries could be removed without a significant decrease in the solution quality.\\
This work proposes a new method to create QUBO approximations of \msat problems systematically. Furthermore, we show that under the current circumstances (missing error correction of quantum hardware), we can receive competitive or even better solutions when solving QUBO approximations of \msat instances on D-Wave's quantum annealer Advantage\_System6.4 compared to solving exact, non-approximated QUBO representations of the same instances.

Our contributions in this paper are:
\begin{enumerate}

    \item[1.]  We introduce a new method of systematically creating effective QUBO approximations for \msat problems.
      \item[2.] In an empirical case study performed on D-Wave's quantum annealer Advantage\_System6.4, we demonstrate that our method of systematically creating QUBO approximations for \msat problems can yield comparable or even better results than non-approximated, exact QUBO transformations.
      \item[3.] We provide evidence for the scaling of this approach. That is, we show that our QUBO approximation method still yields good results when larger formulas, are solved via a classical QUBO solver.
    \item[4.]We  show that naive methods of creating QUBO approximations for the \msat problem, based on the removal of values from given $(n+m)\times(n+m)$-dimensional exact QUBO matrices, generally lead to declines in solution quality.

\end{enumerate}

The remainder of the paper is organized as follows: Section \ref{sec:foundations} introduces the \msat problem and transformations from \msat instances to instances of QUBO. In Section \ref{sec:related_work}, we state related work. In Section \ref{sec:app_methods} we present our method of systematically approximating QUBO transformations for \msat problems. In Sec. \ref{sec:eval} we perform empirical evaluations on quantum and classical hardware to show the potential of our systematic QUBO approximation method. We conclude the paper in Sec. \ref{sec:conclusion} and state future research opportunities.
\section{Foundations}\label{sec:foundations}
\subsection{Satisfiability Problems}
Satisfiability problems are concerned with the satisfiability of Boolean formulas. Thus, we will first define a Boolean formula:\\
\begin{dfn}[Boolean formula \cite{arora2009computational}]
Let $x_1,..., x_n$ be Boolean variables. A \emph{Boolean formula} consists of the variables $x_1, ..., x_n$ and the logical operators  $\wedge$,  $\vee$,  $\neg$. Let $z \in \{0,1\}^n$ be a vector of Boolean values. We identify the value 1 as TRUE and the value 0 as FALSE. The vector $z$ is also called an \emph{assignment}, as it assigns truth values to the Boolean variables $x_1, ..., x_n$ as follows: $x_i = z_i$, where $z_i$ is the $i-th$ component of $z$. If $\phi$ is a Boolean formula, and $z \in \{0,1\}^n$ is an assignment, then $\phi(z)$ is the evaluation of $\phi$ when the variable $x_i$ is assigned the Boolean value $z_i$. If there exists a $z \in \{0,1\}^n$, such that $\phi(z)$ is TRUE, we call $\phi$ satisfiable. Otherwise, we call $\phi$ unsatisfiable \cite{arora2009computational}.
\end{dfn}
Satisfiability problems are often given in conjunctive normal form, which we will define next:
\begin{dfn}[Conjunctive Normal Form \cite{arora2009computational}] A Boolean formula over variables $x_1,...,x_n$ is in \emph{Conjunctive Normal Form (CNF)} if it is of the following structure:
$$\bigwedge_i \big(\bigvee_{j} y_{i_j}\big)$$
Each $y_{i_j}$ is either a variable $x_k$ or its negation $\neg x_k$. The $y_{i_j}$ are called the \emph{literals} of the formula. The terms $(\vee_j y_{i_j})$ are called the \emph{clauses} of the formula. A \emph{$k$CNF} is a CNF formula, in which all clauses contain at most $k$ literals. 
\end{dfn}
Given a Boolean formula $\phi$ in \emph{$k$CNF}, the satisfiability problem is the task of determining whether $\phi$ is satisfiable or not. This problem was one of the first problems for which NP-completeness has been shown \cite{cook1971complexity}. In this paper, we will especially consider \emph{$3$CNF} problems, which we will refer to as \tsat{} problems.\\ The generalization of the SAT problem is called MAX-SAT. In the MAX-SAT problem, we are given a Boolean formula $\phi$ consisting of $m$ clauses. The task is to find an assignment of truth values to the variables of $\phi$ such that as many clauses as possible are satisfied. Finding an assignment in the MAX-SAT problem that satisfies $m$ clauses is thus equivalent to solving the corresponding satisfiability problem (i.e., determining whether $\phi$ is satisfiable or not). MAX-SAT is thus NP-hard as well. We emphasize that in a \msat instance, we are given a \tsat~ formula (not an \msat formula) with the task of finding the maximum number of satisfiable clauses.

\subsection{Quadratic Unconstrained Binary Optimization (QUBO)}\label{sec:PBO}
\noindent In this section we will formally introduce quadratic unconstrained binary optimization (QUBO) and related terminology that will be used in the remainder of this paper.
\begin{dfn}[QUBO~\cite{glover2018tutorial}] Let $\mathcal{Q} \in \mathbb{R}^{n \times n}$ be a square matrix and let $x \in \{0,1\}^n$ be an $n$-dimensional vector of Boolean variables. The QUBO problem is defined as follows:
\begin{equation}\label{eq:qubo}
 \text{minimize} \quad H_{\textit{QUBO}}(x) = x^T\mathcal{Q}x = 
\sum_{i}\mathcal{Q}_{ii}x_i + \sum_{i <j}\mathcal{Q}_{ij}x_ix_j
\end{equation}
\end{dfn}

 We call $H_{\textit{QUBO}}(x)$ the (QUBO) energy of vector $x$. The matrix $Q$ will also be called \emph{QUBO matrix}. Representing a QUBO matrix as an upper triangular matrix is customary. Note specifically that a QUBO matrix is just the matrix representation of the quadratic pseudo-Boolean polynomial shown in Eq. \ref{eq:qubo}. 

To transform a given \msat instance to an instance of QUBO, we will introduce additional variables that do not correspond to any variables of the given \msat instance. We will often say that an assignment $\vec{x} = (x_1 := v_0, \dots, x_n := v_n), v_i \in \{0,1\}$ of Boolean values to the variables $x_1, \dots, x_n$ of the \msat instance has energy $E$ in $Q$, by which we mean:
\begin{equation}
    min\; \{(\vec{x},y)^T Q (\vec{x},y) \;|\; y \in \{0,1\}^m\} = E
\end{equation}
Here $Q$ is a QUBO matrix and $(\vec{x},y)$ is an  $(n+m)$-dimensional column vector defined as $(\vec{x},y) = (x_1 = v_0, \dots, x_n = v_n,y_1, \dots, y_m)$. The first $n$ values of the vector $(\vec{x},y)$ are given by the assignment $\vec{x} = (x_1 := v_0, \dots, x_n := v_n), v_i \in \{0,1\}$ of Boolean values to the variables $x_1, \dots, x_n$ of the \msat instance. The last $m$ entries represent the values of the auxiliary variables $y_1, \dots, y_m$.

\subsection{\msat as QUBO: The General Idea}\label{sec:general_idea}
In this Section we first introduce the idea behind most transformations from \msat to QUBO, before explaining specific QUBO transformations used in this work.
The main observation is that each clause of a \tsat~ formula of a \msat instance contains either exactly zero, one, two, or three negations. It is well known that each of these four types of clauses can be expressed as a pseudo-Boolean function as follows \cite{verma2021efficient}:
\begin{enumerate}
    \item[1.] Zero negations $(x_i \vee x_j \vee x_k): f(x_i, x_j,x_k) = -x_i - x_j -x_k +x_{i}x_{j} +x_ix_k +x_jx_k -x_{i}x_{j}x_{k}$
    \item[2.] One negation $(x_i \vee x_j \vee -x_k): f(x_i, x_j,x_k) = -1 + x_k - x_ix_k -x_jx_k +x_ix_jx_k$
    \item[3.] Two negations $(x_i \vee -x_j \vee -x_k): f(x_i, x_j,x_k) = - 1 +x_jx_k -x_ix_jx_k$
    \item[4.] Three negations $(-x_i \vee -x_j \vee -x_k): f(x_i, x_j,x_k) = -1 +  x_ix_jx_k$
\end{enumerate}
Using this approach, whenever a clause is satisfied, $f(x_1,x_2,x_3) = -1$ holds, and in the case the clause is not satisfied, then $f(x_1,x_2,x_3) = 0$. Thus, a satisfying assignment minimizes $f(x_1,x_2,x_3)$ as desired. By applying the correct pseudo-Boolean function to all clauses of the \tsat~ formula of the \msat instance and summing all resulting pseudo-Boolean polynomials, one receives a pseudo-Boolean polynomial that is minimized by the best solution(s) to the corresponding \msat problem. \\ Note that the four cases from above create cubic terms $x_ix_jx_k$. As QUBO is concerned with minimizing a quadratic pseudo-Boolean polynomial, we must express the cubic terms as a quadratic polynomial. This can be achieved by quadratic reformulation techniques, which add an extra variable to the problem for each cubic term that gets quadratically reformulated. We refer to \cite{anthony2017quadratic} for further information on quadratic reformulation techniques. Note that the quadratic reformulation of a cubic term introduces a new auxiliary variable for each cubic term.
 \subsection{Chancellor's transformation}\label{subsec:chancellor_qubo}
The general idea of Chancellor's transformation \cite{chancellor2016direct} is to map an arbitrary clause of a satisfiability problem to an instance of QUBO such that all variable assignments that satisfy the clause have the same minimum energy in the QUBO minimization problem. Simultaneously, the single variable assignment, that does not have the minimum energy in the QUBO problem should have a higher energy. By applying QUBO mappings that follow this logic to each clause of the \tsat~ formula of a \msat instance and adding up the resulting quadratic polynomials, a QUBO $\mathcal{Q}_{instance}$ mapping for the whole \msat instance is received. As a consequence of this construction, it is guaranteed that the minimum of $\mathcal{Q}_{instance}$ corresponds to an assignment of Boolean values to the variables of the \msat instance, such that the most clauses are satisfied. After some specific logical deductions, which we cannot present in-depth due to space restrictions, Chancellor derives mappings that transform each individual clause of a \tsat~ formula to an instance of QUBO. For the four types of clauses defined in Sec. \ref{sec:general_idea} Chancellors mappings are depicted in Table \ref{tab:chancellor_qubo}.

\begin{table}[h]
   \caption{Chancellor's QUBO transformations for the four types of clauses}\label{tab:chancellor_qubo}
    \begin{minipage}{.5\linewidth}
      \caption*{(a) $( x_i \lor x_j\lor x_k )$}
      \vspace{.05cm}
\centering
\begin{tabular}[h]{|r||p{0.3cm}|p{0.3cm}|p{0.3cm}|c|}
\hline
& $x_i$ & $x_j$ & $x_k$ &$a_l$ \\
\hline
\hline
$x_i$ &-2 & 1 & 1 &1 \\
\hline
$x_j$ & & -2 & 1 &1\\
\hline
$x_k$ & & & -2 & 1 \\
\hline
$a_l$ & & & &-2 \\
\hline
\end{tabular}
    \end{minipage}%
    \begin{minipage}{.5\linewidth}
      \centering
        \caption*{(b) $( x_i \lor x_j \lor \lnot x_k )$}
        \vspace{.05cm}
        \begin{tabular}[h]{|r||p{0.3cm}|p{0.3cm}|p{0.3cm}|c|}
\hline
& $x_i$ & $x_j$ & $x_k$ & $a_l$ \\
\hline
\hline
$x_i$ & -1& 1 & & 1\\
\hline
$x_j$ & & -1& &1 \\
\hline
$x_k$ & & & & \\
\hline
$a_l$& & & &-1\\
\hline
\end{tabular}
    \end{minipage}

\vspace{.25cm}
\begin{minipage}[!h]{.5\linewidth}

      \caption*{(c) $( x_i \lor \lnot x_j \lor \lnot x_k )$}
      \vspace{.05cm}
      \centering
      \begin{tabular} [h]{|r||p{0.3cm}|p{0.3cm}|p{0.3cm}|c|}
\hline
& $x_i$ & $x_j$ & $x_k$  & $a_l$\\
\hline
\hline
$x_i$ & -1 &  &  & 1\\
\hline
$x_j$ & & -1 & 1 & 1 \\
\hline
$x_k$ & & & -1 & 1 \\
\hline
$a_l$ & & & & 2\\
\hline
\end{tabular}
    \end{minipage}%
    \begin{minipage}{.5\linewidth}
      \centering
        \caption*{(d) $( \lnot x_i \lor \lnot x_j \lor \lnot x_k )$}
        \vspace{.05cm}
      
       \begin{tabular}[h]{|r||p{0.3cm}|p{0.3cm}|p{0.3cm}|c|}
\hline
& $x_i$ & $x_j$ & $x_k$  & $a_l$ \\
\hline
\hline
$x_i$ & -1 & 1 & 1 & 1 \\
\hline
$x_j$ & & -1 & 1 & 1 \\
\hline
$x_k$ & & & -1 & 1\\
\hline
$a_l$ & & & & -1\\
\hline
\end{tabular}

    \end{minipage} 
\end{table}
The following example explains the mappings.
\begin{expl}\label{expl:chancellor}
Suppose we are given the formula $(x_1 \vee x_2 \vee x_3) \wedge (x_1 \vee x_2 \vee -x_3)$. As the first clause is of type 1 (i.e., has no negations), we apply the transformation shown in Table \ref{tab:chancellor_qubo}(a) and receive the polynomial $P_1 = -2x_1 -2x_2 -2x_3 -2_a1 +x_1x_2 + x_1x_3 + x_1a_1 + x_2x_3 +x_2a_1 + x_3a_1$. Then we apply the QUBO transformation of Table \ref{tab:chancellor_qubo}(b) to the second clause and receive polynomial $P_2 = -x_1 -x_2 - a_2 +x_1a_2 +x_2a_2$. Note that we have replaced index $l$ of variable $a_l$ with $1$ in the first clause and $2$ in the second clause. As explained in Sec. \ref{sec:general_idea} each clause needs an additional variable to represent cubic terms as a quadratic polynomial. Hence $a_l$ is the additional variable of the $l$-th clause. Summing up the polynomials $P_1$ and $P_2$ yields $P_{final} = P_1 + P_2= -3x_1 -3x_2 -2x_3 -2a_1 -a_2 +2x_1x_2 +x_1x_3 + x_1a_1 +x_1a_2 + x_2x_3 +x_2a_1 -x_2a_2 +x_3a_1 +x_3a_3$ which can be represented as a QUBO matrix as shown in Tab. \ref{tab:combined_example_qubo}.

\begin{table}[h]
   \caption{Result of combining the QUBO matrices shown in Tab. \ref{tab:chancellor_qubo}a) and Tab. \ref{tab:chancellor_qubo}b).}\label{tab:combined_example_qubo}
    
\centering
\begin{tabular}[h]{|r||p{0.3cm}|p{0.3cm}|p{0.3cm}|c|c|}
\hline
& $x_1$ & $x_2$ & $x_3$ &$a_1$ & $a_2$ \\
\hline
\hline
$x_1$ &-3 & 2 & 1 &1 &1 \\
\hline
$x_2$ & & -3 & 1 &1 &1\\
\hline
$x_3$ & & & -2 & 1 & 1\\
\hline
$a_1$ & & & &-2 & \\
\hline
$a_2$ & & & & &-1 \\
\hline
\end{tabular}

\end{table}
\end{expl} 

\subsection{Nüßlein's transformation}
Nüßlein's transformation \cite{nusslein2023solving} employs a similar idea as Chancellor's transformation. Each clause is mapped to an instance of QUBO $\mathcal{Q}$ such that all satisfying assignments of a clause have the same minimal energy in $\mathcal{Q}$. In contrast, the single non-satisfying assignment of the clause has a higher, non-optimal energy. Nüßlein observed that the mapping of the clauses to instances of QUBO could be realized by the mappings depicted in Tab. \ref{table:qubo_nuesslein}.

\begin{table}[!htb]
   \caption{Nüßlein's QUBO transformations for the four different types of clauses \cite{nusslein2023solving}}
    \begin{minipage}{.5\linewidth}
      \caption*{(a) $( a \lor b\lor c )$}
       \vspace{.05cm}
\centering
\begin{tabular}[h]{|r||p{0.3cm}|p{0.3cm}|p{0.3cm}|c|}
\hline
& a & b & c & K \\
\hline
\hline
a & & 2 & & -2 \\
\hline
b & & & & -2 \\
\hline
c & & & -1 & 1 \\
\hline
K & & & & 1 \\
\hline
\end{tabular}
    \end{minipage}%
    \begin{minipage}{.5\linewidth}
      \centering
        \caption*{(b) $( a \lor b \lor \lnot c )$}
         \vspace{.05cm}
        \begin{tabular}[h]{|r||p{0.3cm}|p{0.3cm}|p{0.3cm}|c|}
\hline
& a & b & c & K \\
\hline
\hline
a & & 2 & & -2 \\
\hline
b & & & & -2 \\
\hline
c & & & 1 & -1 \\
\hline
K & & & & 2 \\
\hline
\end{tabular}
\label{table:qubo_nuesslein}
    \end{minipage}

\vspace{.25cm}
\begin{minipage}[!h]{.5\linewidth}

      \caption*{(c) $( a \lor \lnot b \lor \lnot c )$}
       \vspace{.05cm}
      \centering
      \begin{tabular} [h]{|r||p{0.3cm}|p{0.3cm}|p{0.3cm}|c|}
\hline
& a & b & c & K\\
\hline
\hline
a & 2 & -2 & & -2 \\
\hline
b & & & & 2 \\
\hline
c & & & 1 & -1 \\
\hline
K & & & & \\
\hline
\end{tabular}
    \end{minipage}%
    \begin{minipage}{.5\linewidth}
      \centering
        \caption*{(d) $( \lnot a \lor \lnot b \lor \lnot c )$}
         \vspace{.05cm}
       \begin{tabular}[h]{|r||p{0.3cm}|p{0.3cm}|p{0.3cm}|c|}
\hline
& a & b & c & K \\
\hline
\hline
a & -1 & 1 & 1 & 1 \\
\hline
b & & -1 & 1 & 1 \\
\hline
c & & & -1 & 1 \\
\hline
 K & & & & -1 \\
\hline
\end{tabular}
    \end{minipage} 
     
\end{table}
The QUBO transformations for each of the clause types are applied to the clauses of an \msat problem in the same way as demonstrated in Example \ref{expl:chancellor}.

\subsection{\pq method}
As explained in the previous sections, to transform a given \msat instance to an instance of QUBO, it suffices to transform each clause of the \tsat{} formula associated with the \msat problem to an instance of QUBO. All the QUBO instances resulting from the transformation of the clauses will then be combined into a single QUBO instance (as demonstrated in Sec. \ref{subsec:chancellor_qubo}). The \pq method \cite{zielinski23Pattern} is a meta-approach that can automatically identify all possible transformations from a given clause to an instance of QUBO. As explained in Sec. \ref{sec:general_idea} to transform a clause to an instance of QUBO, an additional variable is needed. Thus, the \pq method is given a blank $(4\times4)$-dimensional upper triangular QUBO matrix as input. The user then specifies a set of values that the \pq method can use to insert into the blank QUBO matrix. Given this set of values and a specific clause, the \pq method exhaustively searches the space of $4\times4$-dimensional upper triangular QUBO matrices that only consist of values of the user-specified set of values. The goal of this exhaustive search procedure is to find QUBO matrices in which all the minima correspond to satisfying solutions to the given clause. This way, the \pq method finds all possible methods of transforming a given clause to an instance of QUBO. Similarly to the previously explained methods by Chancellor and Nüßlein, by applying the respective transformations to the clauses of the \tsat{} formula and combining the resulting QUBO matrices, we receive a QUBO representation of the \msat instance.

\section{Related Work}\label{sec:related_work}
\noindent In a paper by Sax et al. \cite{sax2020approximate}, the authors studied the creation of QUBO approximations for several problem classes by randomly removing entries from a given initial QUBO matrix. By removing values from a QUBO matrix, fewer physical qubits are needed to solve the QUBO on a quantum annealer. This study thus researched how much the problem size (resp. the number of needed physical qubits) could be reduced without worsening the solution quality too much. For the class of \tsat{} problems, the authors transformed each \tsat{} instance to an instance of QUBO using Choi's method \cite{choi2010adiabatic}. The authors continuously removed values from the initial QUBO created by Choi's method until only diagonal entries were left in the QUBO matrix. They observed that up to 70\%  of the initial QUBO entries could be removed without observing a significant decline in the solution quality.
Apart from this study, QUBO approximation has mostly been studied in contexts where an objective function that is to be expressed as a QUBO instance is unknown or when objectives or constraints of a problem cannot be encoded effectively in a quadratic model. In \cite{nusslein2023black}, the authors present a method to approximate the low-energy spectrum of a problem as a QUBO using a black-box approach. This method involves concurrently conducting a regression on the low-energy spectrum and differentiating between high-energy and low-energy states through classification. They demonstrate that this approach yields a significantly more accurate approximation of the low-energy spectrum, resulting in enhanced optimization performance compared to merely performing regression on all sampled states.
\noindent In a study by Matsumori et al. \cite{matsumori2022application}, the authors are concerned with the creation of a QUBO approximation of a design optimization problem, because the objectives and constraints of the design optimization problem cannot be expressed explicitly as a QUBO problem. In their approach, a black-box optimization approach based on the factorization machine \cite{rendle2010factorization} is used to create a QUBO approximation of the input problem. 

\section{\msat QUBO Approximation Methods}\label{sec:app_methods}
\noindent In this section, we will define the concept of QUBO approximation and detail our approach to systematically creating such QUBO approximations for \msat instances. We will evaluate these approaches in Sec. \ref{sec:eval}.

\begin{dfn}[QUBO Approximation for MAX-3SAT] Let $\psi$ be a \msat instance. A QUBO instance $\mathcal{Q}$ is a \emph{QUBO approximation} of $\psi$ if there are optimal solutions of $\psi$ that do not have minimal energy in $Q$ (Sec. \ref{sec:PBO} defines how to calculate the energy of an assignment). 
\end{dfn}
\noindent Intuitively, this definition states that some of the optimal solutions of a given \msat instance cannot be found by minimizing a given QUBO approximation for this \msat instance. \\
In the following sections, we will introduce two strategies for creating QUBO approximations of \msat problems. In Sec. \ref{subsec:naive_qubo_approx}, we will define naive QUBO approximation analogously to Sax et al. \cite{sax2020approximate}. In Sec. \ref{subsec:sys_qubo_approx}, we present the core contribution of this paper: our method of systematically creating QUBO approximations of \msat instances.

\subsection{Naive QUBO Approximation}\label{subsec:naive_qubo_approx}
\noindent In their study, Sax et al. \cite{sax2020approximate} created QUBO approximations by removing values from QUBO matrices that resulted from applying Choi's transformation to a set of \tsat{} formulas. In this paper, we will use the term \emph{naive QUBO approximation} for creating QUBO approximations for \msat problems by removing values from a given, non-approximated QUBO representation. Thus, Sax et al. conducted naive QUBO approximation for Choi's method. Choi's method leads to QUBO matrices of dimension $3m \times 3m$, where $m$ is the number of clauses of a \tsat{} formula. In this paper, however, we want to begin our study of QUBO approximation methods for \msat problems by applying naive QUBO approximation to a class of QUBO transformations that results in $(n+m) \times (n+m)$-dimensional QUBO matrices, where $n$ is the number of variables and $m$ is the number of clauses of a \tsat{} problem. This class contains many thousand different QUBO transformations \cite{zielinski23Pattern}. Thus, we define two methods of creating naive QUBO approximations analogously to Sax et al. \cite{sax2020approximate}:\\
\begin{enumerate}
    \item[1.]\textbf{Min Pruning}: We are given an initial, exact, non-approximated QUBO representation of an \msat instance, consisting of the QUBO matrix $\mathcal{Q}$. To create a naive QUBO approximation, we proceed to remove the $N$ smallest entries of $\mathcal{Q}$. 

    \item[2]\textbf{Random Pruning}: Random Pruning generally works similarly, except that we now remove $N$ randomly chosen values, no matter their sign or magnitude.
\end{enumerate}

\subsection{Systematic QUBO Approximation}\label{subsec:sys_qubo_approx}
\noindent In this section, we present the core contribution of our paper: We introduce a new and systematic approach to approximating QUBO representations of \msat instances. As explained in Sec. \ref{sec:general_idea}, the general idea of $(n+m)\times(n+m)$-dimensional QUBO transformations for \msat instances is to transform each clause of the \tsat~ formula of an \msat instance into an instance of QUBO and sum up all the resulting QUBO representations of the clauses. This way, we receive a QUBO in which all minima correspond to optimal solutions of the \msat instance. To accurately transform a \tsat~ clause into an instance of QUBO, the QUBO instance needs to contain an additional auxiliary variable (see Sec. \ref{sec:PBO}). Thus each QUBO matrix that results from transforming a \tsat~ clause into an instance of QUBO is of dimension $(4\times4)$ (three variables correspond to the variables of the clause and one auxiliary variable). The idea of our approach to systematically approximate QUBO transformations for \msat problems is to create $(n \times n)$-dimensional QUBO representations instead of $(n+m) \times (n+m)$-dimensional ones. In particular, we transform each clause to an instance of QUBO with dimension $(3 \times 3)$ instead of $(4\times4)$. Hence, our approach does not contain any auxiliary variables. The following theorem shows that transforming \tsat~ clauses into instances of QUBO, which consist of $(3\times3)$-dimensional QUBO matrices, is necessarily a QUBO approximation.

\begin{theorem}\label{theorem:approx}
Let $x_1, x_2, x_3$ be the variables of a clause of a \msat instance. Let $S_{SAT}$ be the set of all assignments of Boolean values to $x_1, x_2, x_3$ that satisfy the clause and let $S_{UNSAT}$ be the set of all assignments of Boolean values to $x_1,x_2,x_3$ that do not satisfy the clause. Thus $S_{SAT}$ and $S_{UNSAT}$ are sets consisting of 3-tuples ($k_1$, $k_2$, $k_3$), $k_1, k_2, k_3$ $\in \{0,1\}$, where $k_i$ denotes the value of $x_i$ for $i \in \{1,3\}$.  Let $f(x_1,x_2,x_3):= \alpha_1x_1 + \alpha_2x_2 + \alpha_3x_3 + \alpha_{12}x_1x_2 + \alpha_{13}x_1x_3 + \alpha_{23}x_2x_3$ be a polynomial in $x_1,x_2,x_3$ with $\alpha_i \in \mathbb{R}$. Then there do not exist choices of $\alpha_1, \alpha_2, \alpha_3, \alpha_{12}, \alpha_{13}, \alpha_{23}, E \in \mathbb{R}$ such that $f(s) = -E$ and $f(u) > -E$ for each $s \in S_{SAT}$ and for each $u \in S_{UNSAT}$.
\end{theorem}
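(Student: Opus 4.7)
The plan is to derive a contradiction by evaluating $f$ at all eight Boolean assignments and showing that the system of equations imposed by $f(s) = -E$ on $S_{\mathit{SAT}}$ together with the strict inequality $f(u) > -E$ on $S_{\mathit{UNSAT}}$ is inconsistent. Because every 3-literal clause has exactly one unsatisfying assignment, I would first reduce to the canonical clause $x_1 \vee x_2 \vee x_3$ by flipping variables: if literal $\ell_i$ of the clause is $\neg x_i$, substitute $y_i = 1 - x_i$. Under this affine substitution a quadratic polynomial remains quadratic, and the clause becomes $y_1 \vee y_2 \vee y_3$, whose unique unsatisfying assignment is $(0,0,0)$. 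The substitution introduces a constant $c$, but replacing $E$ by $E' = E + c$ absorbs it, so it suffices to prove the statement for the canonical clause with $f$ having no constant term.

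For the canonical case, the condition $f(0,0,0) > -E$ immediately gives $E > 0$, since $f(0,0,0) = 0$. Plugging in the three weight-$1$ satisfying assignments $(1,0,0), (0,1,0), (0,0,1)$ yields $\alpha_1 = \alpha_2 = \alpha_3 = -E$. Plugging in the three weight-$2$ assignments $(1,1,0), (1,0,1), (0,1,1)$ and using the values just derived forces $\alpha_{12} = \alpha_{13} = \alpha_{23} = E$.

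Finally, evaluating at the weight-$3$ assignment $(1,1,1)$ gives
\[
f(1,1,1) = (\alpha_1+\alpha_2+\alpha_3) + (\alpha_{12}+\alpha_{13}+\alpha_{23}) = -3E + 3E = 0,
\]
which must equal $-E$ and therefore forces $E = 0$. This contradicts $E > 0$ and completes the argument for the canonical clause, hence for every clause type.

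The main obstacle I anticipate is handling all four negation patterns uniformly and cleanly, rather than doing four nearly identical case analyses. The affine substitution $y_i = 1 - x_i$ preserves quadraticity but injects a constant term into $f$; one must be precise that this constant can be absorbed into $E$ without affecting either the equalities on $S_{\mathit{SAT}}$ or the strict inequality on $S_{\mathit{UNSAT}}$. Once that bookkeeping is in place, the essence of the proof is the overdetermination phenomenon: the seven equalities coming from $S_{\mathit{SAT}}$ pin down all six coefficients of $f$ in terms of $E$, and the consistency condition at $(1,1,1)$ collapses to $E = 0$, which is incompatible with the strict inequality at the unique unsatisfying assignment. This is precisely the structural obstruction that an auxiliary variable is needed to circumvent.
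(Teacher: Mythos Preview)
Your proof is correct and follows essentially the same route as the paper's: evaluate $f$ at the weight-$1$ and weight-$2$ satisfying assignments to pin down all six coefficients in terms of $E$, then obtain a contradiction at $(1,1,1)$. Your version is in fact slightly more complete than the paper's, since you justify the reduction to the canonical clause via the affine substitution $y_i = 1 - x_i$ (the paper merely asserts ``without loss of generality'' and treats only the no-negation case) and you explicitly derive $E > 0$ from the unsatisfying assignment $(0,0,0)$, which closes the case $E = 0$ that the paper's contradiction at $(1,1,1)$ leaves implicit.
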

\begin{proof}
We have to prove Theorem \ref{theorem:approx} for all four types of clauses (see Sec. \ref{sec:general_idea}).
Without loss of generality, we proof Theorem \ref{theorem:approx}  for clauses of type 1 like $(x_i \vee x_j \vee x_k)$. We will use a bitstring of length $3$ like \textbf{100} to denote the assignment $x_i = 1, x_j = 0, x_k = 0$.\\
Assume to the contrary, that we can choose  $\alpha_1, \alpha_2, \alpha_3, \alpha_{12}, \alpha_{13}, \alpha_{23}, E \in \mathbb{R}$ such that $f(s) = -E$ and $f(u) > -E$ for each $s \in S_{SAT}$ and for each $u \in S_{UNSAT}$. As \textbf{100}, \textbf{010} and \textbf{001} are all satisfying assignments for clauses of type 1, $f(100) = f(010) = f(001) = -E$. Thus, $\alpha_1 = \alpha_2 = \alpha_3 = -E$. Since \textbf{110} is a satisfying assignment, it follows that  $f(110) = \alpha_1x_i + \alpha_2x_j + \alpha_{12}x_ix_j = -E$. Since $\alpha_1 = \alpha_2 = -E$, we conclude that $\alpha_{12} = E$. Similarly, we show that $\alpha_{13} = \alpha_{23} = E$. But then $f(111) = -E -E -E +E +E +E = 0$, which is a contradiction, because \textbf{111} is a satisfying assignment and thus it should hold that $f(111) = -E$. 
\end{proof}\ 
As a \msat clause consists of three Boolean variables, there are $2^3 =8$  possible assignments of Boolean values to the variables of the clause. For every clause, exactly seven of these assignments satisfy it, and exactly one assignment does not. Theorem \ref{theorem:approx} shows that it is not possible to encode all 7 satisfying solutions of a clause as the minima of a QUBO minimization problem, consisting of a $(3\times3)$-dimensional QUBO matrix $\mathcal{Q}$. However, the proof of Theorem \ref{theorem:approx} shows that it is possible to encode 6 of the 7 satisfying solutions of a clause as a minimum of a QUBO minimization problem. This means that one satisfying solution has a non-optimal energy. By using this approximation approach, we save one auxiliary variable per clause of the \tsat~ formula of the \msat instance, as well as some quadratic coefficients we would have otherwise had to add in an exact QUBO transformation of dimension $(n+m) \times (n+m)$. Consequently, fewer physical qubits are needed to solve these QUBO approximations on quantum hardware. In return for this gain, we can no longer guarantee that every optimal solution to the \msat problem also has minimal energy in the approximated QUBO minimization problem.
 We will demonstrate that this trade-off is worthwhile in Sec. \ref{subsec:eval_systematic}. \\
We will now show how to systematically find $(3\times 3)$-dimensional QUBOs, for which six of the seven satisfying assignments of a clause have the same energy, while the remaining two assignments have a higher energy in $\mathcal{Q}$. Our approach adapts the \pq method \cite{zielinski23Pattern}. Instead of searching within the space of $(4\times 4)$-dimensional QUBO matrices and looking for those where all satisfying assignments have the same minimal energy, we are looking for $(3\times3)$-dimensional QUBO matrices in which six of the seven satisfying assignments of a clause have the same minimal energy. Thus, our task is to assign values to the variables $\alpha_{1}, \dots, \alpha_{23}$ of the prototype of a $(3\times 3)$-dimensional QUBO matrix shown in Table \ref{tab:n_qubo_prototype}, such that the aforementioned condition is satisfied. 

\begin{table}
\centering
\caption{Prototype of a QUBO approximation for a \msat Clause}\label{tab:n_qubo_prototype}
\begin{tabular}[h]{|r||p{0.3cm}|p{0.3cm}|p{0.3cm}|c|}
\hline
& $x_i$ & $x_j$ & $x_k$ \\
\hline
\hline
$x_i$ & $\alpha_1$& $\alpha_{12}$ &$\alpha_{13}$  \\
\hline
$x_j$ & & $\alpha_2$ & $\alpha_{23}$ \\
\hline
$x_k$ & & & $\alpha_ 3$ \\
\hline
\end{tabular}
\end{table}

\noindent Our algorithm to find $(3\times3)$-dimensional QUBO approximations of \tsat~ clauses is as follows:
\begin{algorithm}
\begin{algorithmic}[1]\label{algorithm_code}
\Require Set $S$ of values the search method can insert into the QUBO matrix as values for $\alpha_1, \alpha_2, \alpha_3, \alpha_{12}, \alpha_{13},\alpha_{23}$.
\Require Set $A = \{(\alpha_1, \alpha_2, \alpha_3, \alpha_{12}, \alpha_{13}, \alpha_{23}) \in $S$^6$\} containing all possible six tuples of values of $S$.
\Procedure{Search QUBO Approximation}{clause\_type}
\State FoundQUBOS = \{\}
\For{tuple in $A$}:
\If{six satisfying assignments for clause of type \emph{clause\_type} have minimal energy}
\State FoundQUBOs.insert(tuple)
\EndIf
\EndFor
\State \textbf{return} FoundQUBOS
\EndProcedure
\end{algorithmic}
\end{algorithm}
The input to the algorithm is a set $S$ of values that can be assigned to $\alpha_1, \dots, \alpha_{23}$. In this paper, we will use $S:=\{-1,0,1\}$.
Then, we specify for which type of clause (see Sec. \ref{sec:general_idea}) the search method should find QUBO approximations. This is needed because we need six of the seven satisfying assignments of a clause to have the same optimal energy in the resulting QUBO problem. As different types of clauses are satisfied by a different set of assignments, the algorithm needs the respective clause type as an input. The method then performs an exhaustive search, trying all possible combinations of assignments of values of $S$ to variables $\alpha_1, \dots, \alpha_{23}$, to find QUBO matrices, for which six of the seven possible satisfying assignments of the specified type of clause have minimal energy. Performing this procedure for all four types of clauses, we receive 4 QUBO approximations for each of the four types of clauses within a few seconds.\\
Remember, as QUBO is just the matrix representation of a quadratic polynomial, the found QUBO approximations for any of the four types of clauses each define a mapping from a \tsat~ clause type to a quadratic polynomial. Fixing one QUBO approximation per clause type thus yields a method of transforming each clause of a \tsat~ formula of a \msat instance into a quadratic polynomial. Summing up all the polynomials that result from applying  QUBO approximations of the correct clause type to the clauses of the \tsat~ formula of the \msat instances yields a QUBO approximation of the whole \msat instance (see example in Sec. \ref{subsec:chancellor_qubo}). As the search resulted in four QUBO approximations for each clause type, there are $256 = 4 \cdot 4 \cdot 4 \cdot 4$ QUBO approximations for an \msat instance. To choose one of these 256 QUBO approximations for an evaluation (see Sec. \ref{sec:eval}), we created a single \tsat{} formula, according to the method described in \ref{subsec:eval_dataset}. We then transformed used each of the 256 QUBO approximations to create a QUBO instance corresponding to the \tsat{} formula. We solved all 256 resulting QUBO instances with D-Wave's tabu search and chose the best-performing approximation (i.e., the one that satisfied most clauses), which is shown in Tab. \ref{tab:full_approx_qubo}, for practical evaluation.

\begin{table}
   \caption{Approximated QUBOs for the four different types of clauses}\label{tab:full_approx_qubo}
    \begin{minipage}{.5\linewidth}
      \caption*{(a) $( x_i \lor x_j\lor x_k )$}
      \vspace{.05cm}
\centering
\begin{tabular}[h]{|r||p{0.3cm}|p{0.3cm}|p{0.3cm}|c|}
\hline
& $x_i$ & $x_j$ & $x_k$ \\
\hline
\hline
$x_i$ &-1 & 1 & 1 \\
\hline
$x_j$ & & -1 & 1\\
\hline
$x_k$ & & & -1 \\
\hline
\end{tabular}
    \end{minipage}%
    \begin{minipage}{.5\linewidth}
      \centering
        \caption*{(b) $( x_i \lor x_j \lor \lnot x_k )$}
        \vspace{.05cm}
        \begin{tabular}[h]{|r||p{0.3cm}|p{0.3cm}|p{0.3cm}|c|}
\hline
& $x_i$ & $x_j$ & $x_k$  \\
\hline
\hline
$x_i$ & & 1 &-1\\
\hline
$x_j$ & & & -1 \\
\hline
$x_k$ & & & 1 \\

\hline
\end{tabular}
    \end{minipage}

\vspace{.25cm}
\begin{minipage}[!h]{.5\linewidth}

      \caption*{(c) $( x_i \lor \lnot x_j \lor \lnot x_k )$}
      \vspace{.05cm}
      \centering
      \begin{tabular} [h]{|r||p{0.3cm}|p{0.3cm}|p{0.3cm}|c|}
\hline
& $x_i$ & $x_j$ & $x_k$ \\
\hline
\hline
$x_i$ & 1 & -1 & -1\\
\hline
$x_j$ & & & 1 \\
\hline
$x_k$ & & & \\
\hline
\end{tabular}
    \end{minipage}%
    \begin{minipage}{.5\linewidth}
      \centering
        \caption*{(d) $( \lnot x_i \lor \lnot x_j \lor \lnot x_k )$}
        \vspace{.05cm}
      
       \begin{tabular}[h]{|r||p{0.3cm}|p{0.3cm}|p{0.3cm}|c|}
\hline
& $x_i$ & $x_j$ & $x_k$    \\
\hline
\hline
$x_i$ & -1 & 1 & 1 \\
\hline
$x_j$ & & -1 & 1  \\
\hline
$x_k$ & & & -1 \\

\hline
\end{tabular}

    \end{minipage}

\end{table}

\ \\
As explained previously, our systematic QUBO approximation approach cannot guarantee that every solution of a \msat problem has minimal energy in the resulting QUBO. Interestingly, despite being an approximation, it is still possible that some optimal solutions of the \msat instance have minimal energy in the QUBO approximation. We will show this by proving the following theorem.
\begin{theorem}\label{theorem:approx}
For any \msat instance $\psi$ consisting of $n$ variables and $m$ clauses, there is at least one QUBO approximation $\mathcal{Q}_{opt}$, that yields the best possible solution to the given \msat when minimized. 
\end{theorem}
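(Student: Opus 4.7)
The plan is to construct $\mathcal{Q}_{opt}$ clause-by-clause from the $(3\times 3)$ QUBO approximations returned by Algorithm~1, aligning the per-clause choice with a fixed optimal assignment of $\psi$. First, fix any optimal assignment $x^\ast \in \{0,1\}^n$ of $\psi$ satisfying $k^\ast$ clauses. The key preliminary fact I would establish is: for every clause type and every one of its seven satisfying local assignments $s$, Algorithm~1 returns at least one $(3\times 3)$ approximation whose set of six preserved (minimum-energy) satisfying assignments contains $s$. This can be checked directly on the output of Algorithm~1 for the four clause types, using the symmetry of each clause under appropriate permutations of its literals to collapse the analysis to a handful of distinct cases.

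Granted this, for every clause $c$ satisfied by $x^\ast$ I would pick an approximation $Q_c$ in whose preserved set the restriction $x^\ast|_c$ lies, and for every clause not satisfied by $x^\ast$ I would pick any of Algorithm~1's approximations. Summing the $Q_c$ under the natural embedding of their three local variables into the global index set $\{1,\dots,n\}$ yields the $(n \times n)$ matrix $\mathcal{Q}_{opt}$. By the previous theorem each $Q_c$ fails to realize one satisfying local assignment as a minimum, so $\mathcal{Q}_{opt}$ genuinely satisfies the QUBO approximation definition of the paper.

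To verify that $x^\ast$ minimizes $\mathcal{Q}_{opt}$, I would invoke the uniform-gap property of the approximations returned by Algorithm~1, visible in Table~\ref{tab:full_approx_qubo} and its analogues: the six preserved satisfying local assignments carry the common minimum energy $-E$ for some $E > 0$, while the single missed satisfying local assignment and the single unsatisfying local assignment both carry the higher energy $0$. Because $Q_c$ was selected so that $x^\ast|_c$ is preserved whenever $c$ is satisfied by $x^\ast$, the total energy of $x^\ast$ in $\mathcal{Q}_{opt}$ equals exactly $-k^\ast E$. For any other assignment $x'$ satisfying $k'$ clauses, at most $k'$ of the per-clause contributions can equal $-E$ and the remaining contributions are $\geq 0$, so the total energy of $x'$ is at least $-k' E \geq -k^\ast E$ by optimality of $x^\ast$. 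Hence $x^\ast$ is a minimizer of $\mathcal{Q}_{opt}$. The main obstacle is the preliminary coverage observation; once that is granted, everything else follows from the separable clause-wise structure of the construction.
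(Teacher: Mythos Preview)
Your proposal is correct and follows essentially the same route as the paper: fix an optimal assignment, invoke the coverage fact that among the four $(3\times 3)$ approximations Algorithm~1 returns for each clause type every satisfying local assignment appears as a minimum in at least one of them, and build $\mathcal{Q}_{opt}$ clause-by-clause accordingly. Your write-up is in fact more explicit than the paper's on the final step, supplying the energy-counting argument (uniform gap between the six preserved minima and the two remaining assignments) that justifies why the clause-wise minimality of $x^\ast$ transfers to global minimality of the sum; the paper simply asserts this implication.
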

By QUBO approximation, we refer to the specific QUBO approximation we introduced in this secion.
\begin{proof}
Suppose   $as_{opt} := (x_1 = as_1, x_2 = as_2, \dots x_n = as_n), as_i \in \{0,1\}$ for $ 1 \leq i \leq n$ is an optimal assignment of Boolean values to the variables of $\psi$ such that no assignment satisfies more clauses of $\psi$ than $as_{opt}$. As explained in Sec. \ref{sec:general_idea} and Sec. \ref{subsec:chancellor_qubo}, to create a QUBO representation of $\psi$,  it suffices to create QUBO representations of each individual clause of $\psi$ and summing up the resulting quadratic polynomials. To create a QUBO representation for which $as_{opt}$ has minimum energy, we thus only need to guarantee that for each clause of $\psi$, the assignment of Boolean values to the variables of the clause, which is given by $as_{opt}$, minimizes the QUBO representation of the clause, if it satisfies the clause. As $as_{opt}$ then minimizes each QUBO representation of each individual clause of $\psi$ $as_{opt}$ satisfies, it follows that $as_{opt}$ minimizes the sum of the QUBO representations (quadratic polynomials) of the clauses, which is the QUBO representation of $\psi$. Thus, to prove Theorem \ref{theorem:approx}, we only need to show that, for each satisfying assignment of a clause, there is at least one QUBO approximation of that clause, in which the respective assignment has minimal energy. Let $S_{SAT} = \{s_1, s_2, \dots, s_7\}$ be the set of the 7 satisfying assignments for a given clause. As stated in this section, the search method finds four approximations for each clause type that assign minimum energy to six of the seven satisfying solutions for the given clause. We choose one of these four QUBO approximations arbitrarily and call it $Q_{a_1}$. Assume w.o.l.g that $s_1$, ..., $s_6$ have minimal energy in a $Q_{a_1}$ All that is left to do is analyze whether $s_7$ has minimal energy in any of the remaining three QUBO approximations the search method has found for this clause. It turns out, that 
there is always at least one QUBO approximation, say $Q_{a_2}$ amongst the four QUBO approximations the search method found for the given clause, such that $s_7$ has minimal energy in $Q_{a_2}$. Thus, when the optimal assignment for the given class (given by $as_{opt}$) is an element of $\{s_1, s_2, \dots, s_6\}$, we choose $Q_{a_1}$ as the QUBO approximation of the clause, else we choose $Q_{a_2}$. In either case, the assignment for a clause (defined by $as_{opt}$) minimizes the QUBO approximation for the clause.
\end{proof}

\section{Evaluation}\label{sec:eval}
This section provides an empirical evaluation of both approximation approaches detailed in Sec. \ref{sec:app_methods}.
\subsection{Dataset}\label{subsec:eval_dataset}
To evaluate the efficacy of naive QUBO approximation for $(n+m)\times (n+m)$-dimensional QUBO transformations as well as our systematic approach of creating $(n\times n)$-dimensional QUBO approximations of \msat instances, we created a dataset of 100 \tsat~formulas. We determined experimentally that the \tsat~ formulas should not possess more than 500 clauses, as otherwise $(n+m)\times (n+m)$-dimensional QUBO instances that resulted from transforming the \tsat~ formulas could no longer be embedded onto D-Wave's QPU architecture. All 100 formulas were created randomly using the Balanced SAT method \cite{spence2017balanced}. We chose this method because it creates \tsat{} formulas for which no approach of exploiting their structure is known \cite{spence2017balanced}. The authors of \cite{spence2017balanced} empirically determined that instances generated by this method are potentially hard to solve for \sota~SAT solvers if they possess approximately 3.6 times more clauses than variables. Using a SAT solver, we observed that all formulas we created, consisting of 500 clauses and 3.6 times more clauses than variables, were unsatisfiable. Thus, all of our 100 \tsat{} formulas consist of 500 clauses and 145 variables ($\approx$ 3.45 times more clauses than variables). Note that this slight deviation of the ratio of the number of clauses and the number of variables does not compromise the difficulty of the formulas. 

\subsection{Evaluation of the Naive QUBO Approximation Methods}
\noindent As described in Sec. \ref{subsec:naive_qubo_approx}, the naive QUBO approximation methods work by randomly removing values from an exact, non-approximated QUBO transformation method. In this paper, we will use Chancellor's and Nüßlein's transformations as the exact QUBO transformations. For each of the 100 formulas in our dataset, and for each of the transformation methods (Chancellor and Nüßlein), we perform the following steps:
\begin{enumerate}
    \item[1.] Transform the given \tsat{} formula to an instance of QUBO $Q_{initial}$.
    \item[2.]Calculate the initial number of non-zero, non-diagonal values $N_{initial}$. Calculate $N_{10}:= 0.1 \times N_{initial}$.
    \item[3.]
    \begin{enumerate}
        \item Min Pruning: Create the QUBO matrix $Q_{10}$ by removing the $N_{10}$ smallest values from $Q_{initial}$. Continue to create $Q_{20}$ by removing  $N_{10}$ values from $Q_{10}$. This procedure is repeated until $Q_{100}$ is created, which is a QUBO matrix that only contains the main diagonal of $Q_{initial}$ and no non-diagonal entries.
        \item Random Pruning: Generally the same as Min Pruning, except that not the $N_{10}$ smallest entries are removed, but $N_{10}$ randomly chosen non-zero, non-diagonal entries.
    \end{enumerate}
\end{enumerate}
Thus, for each formula and for each of the QUBO transformations we receive 11 QUBO representations ($Q_{initial}$, $Q_{10}$, $\dots$, $Q_{100}$). Each of these QUBO matrices is solved 1000 times on D-Wave's quantum annealer Advantage\_System6.4.

From these 1000 samples we generated for each formula, each QUBO transformation (Chancellor, Nüßlein), each pruning method (min pruning, random pruning) and each pruning percentage (0\%, 10\%, ..., 100\% non-zero, non-diagonal values) we select the best answer, i.e., the answer that satisfied the highest number of clauses. We then calculate the average of the best answers for each (pruned) QUBO transformation. The results of this evaluation are shown in Fig. \ref{fig:min_random_remove_eval}.
\begin{figure}[h]
\centering
\includegraphics[width=.49\textwidth]{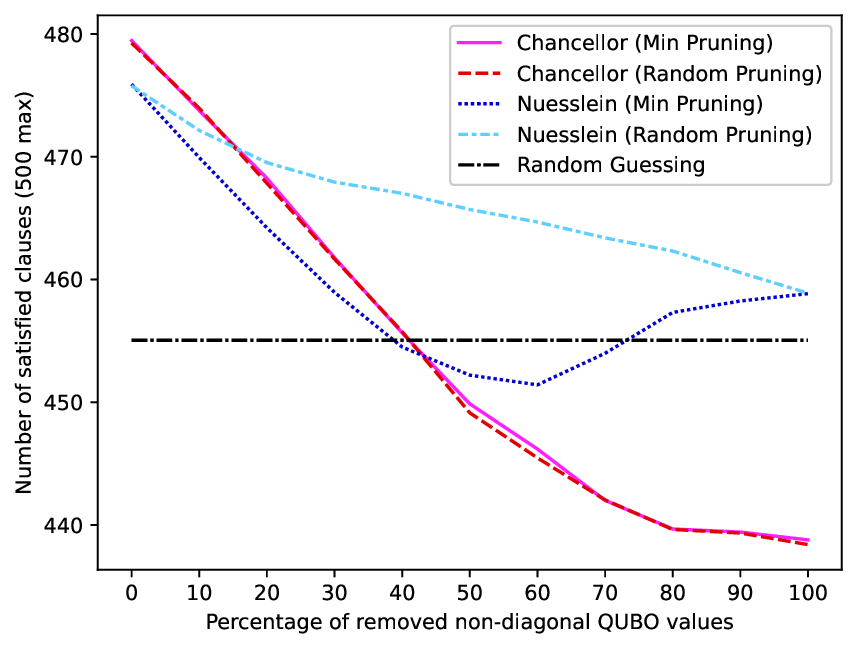}
\caption{Average of the best solutions found for each combination of pruning strategy, QUBO transformation and prune percentage.}\label{fig:min_random_remove_eval}
\end{figure}
It can be seen that no method has found a satisfying solution for any formula, as no result yielded 500 satisfied clauses. Observe, that randomly guessing solutions on average satisfied 455 out of 500 clauses. It is well known, that randomly guessing solutions for a \msat instance can satisfy approximately 7/8 of all clauses of the instance \cite{haastad2001some}. Thus, we expected random guessing to find assignments that satisfy approximately 438 of our 500-clause problems. The fact that in our case random guessing solved more than the expected 438 clauses can be attributed to the small size of each formula (500 clauses). Regardless of the used pruning method and regardless of the initial non-approximated QUBO transformation method (Chancellor or Nüßlein), we observe that removing values from the initial QUBO matrix leads to an immediate decline in the solution quality, i.e., in fewer clauses being solved. Additionally, we observe that the two QUBO formulations seem to behave differently when values are removed. In the case of Chancellor's transformation, there is a straight decline in solution quality, even below the success rate of randomly guessing solutions. In the case of Nüßlein's transformation, we can see that for both, min pruning and random pruning, the decline in the quality of  solutions (i.e., the number of clauses satisfied) stops at some point. Furthermore, we can also see for Nüßlein's transformation that using min pruning reduces the quality of the solutions faster compared to random pruning. As both pruning strategies ultimately lead to the same QUBO representation (i.e., $Q_{100}$ contains only the initial main diagonal in any case), we can see that the final solution quality is identical. Although it would be interesting to identify the reason why in Chancellor's case pruning leads to a constant decline in solution quality while the decline in the solution quality of Nüßlein's pruned QUBO  representations seems to stop at some point, as well as the difference of random and min pruning in Nüßlein's case, it is beyond the scope of this paper. We leave this analysis for future research. In this work, we are only interested in the observation that the solution quality immediately declines when values are removed. This shows that the results of the study conducted in \cite{sax2020approximate}, where 70\% of the non-zero quadratic QUBO values could be removed without significant loss in solution quality, only holds for Choi's transformation and is not in general a good strategy for creating QUBO approximations. In contrast to the results of these naive approximation methods, we will demonstrate in the following section that our approach of systematically creating QUBO approximations can even increase the solution quality.

\subsection{Evaluation of the Systematic QUBO Approximation method}\label{subsec:eval_systematic}
\noindent In this section, we evaluate our proposed method of systematically creating QUBO approximations, as described in Sec. \ref{subsec:sys_qubo_approx}. We transformed all \tsat{} instances to instances of QUBO using the QUBO approximation shown in Table \ref{tab:full_approx_qubo} (which we call \emph{FullApprox}), Chancellor's transformation, and Nüßlein's transformation. As embedding a QUBO onto D-Wave's QPU is a heuristic process, we generated 10 embeddings for each formula to reduce the influence of particular embeddings (for example different embeddings may require a different number of physical qubits). For each embedding, we then generated 100 samples. Thus, for each formula and each QUBO transformation, we generated 1000 samples, yielding 100,000 D-Wave samples per QUBO transformation. Furthermore, for each formula, we randomly guessed 1000 solutions to compare the D-Wave results against a random guessing baseline. For each formula and each method of solving the \msat problem, we used the best of the 1000 received answers to compare with the results of the other methods. The distribution of the best results for each formula and each method is shown in Fig \ref{fig:full_approx_eval}a. 
To determine the relative performance of the QUBO transformation approaches and the random guessing method, we compared the best solution of the FullApprox method with the best solution of all the other methods. That is, for each formula, we calculated the number of satisfied clauses by the best solution of the FullApprox method and subtracted the number of clauses satisfied by the best solution of any other method. Similarly, we compared the best solution for Chancellor's transformation (resp. Nüßlein's transformation) to the best solution for the random method for each formula. The results of this comparison are shown in Fig \ref{fig:full_approx_eval}b). A label  \emph{A,B} on the x-axis denotes that the respective boxplot shows the results of comparing method A against method B. That is,  label \emph{FA, C} denotes that the results of the FullApprox transformation are compared to Chancellor's transformation as described above.

\begin{figure}[h]
\centering
\includegraphics[width=.49\textwidth]{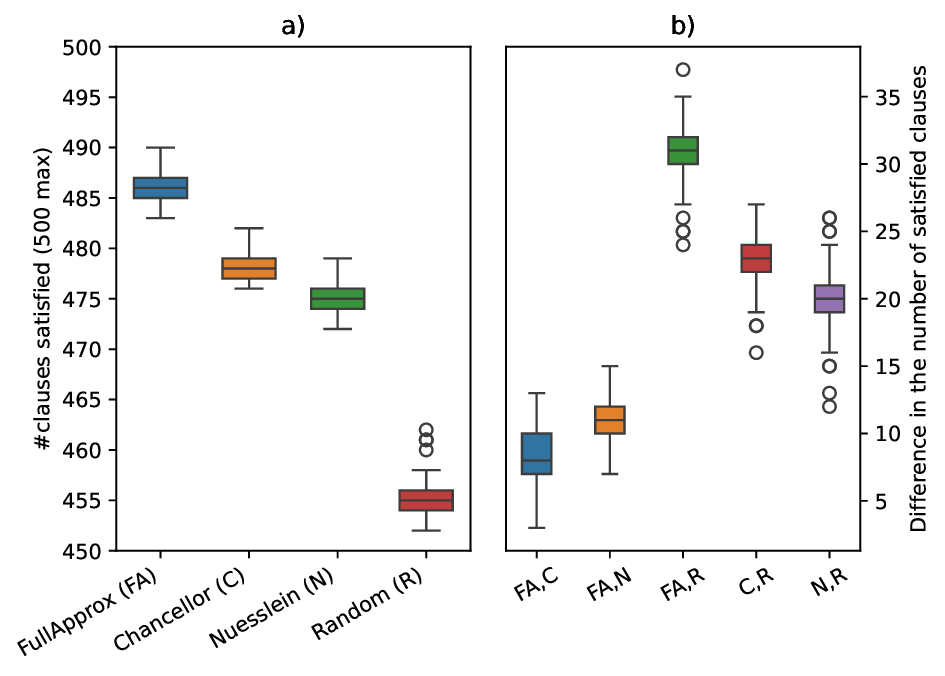}
\caption{Distribution of the best solutions found with different QUBO transformations (left). Difference of the best solutions found by one approach compared to the best solutions found by another approach (right).}\label{fig:full_approx_eval}
\end{figure}
From Fig. \ref{fig:full_approx_eval}a we can see that no satisfying solution was found for either of the approaches since no answer of any approach yielded 500 satisfied clauses. However, as we are solving MAX-SAT, we are interested in the approach that satisfies the largest number of clauses. The best results, i.e., the most satisfied clauses, were obtained using our FullApprox method. The exact QUBO transformations by Chancellor and Nüßlein and the random method yielded fewer satisfied clauses for each formula. The direct comparison (Fig. \ref{fig:full_approx_eval}b) shows that the best solution for all the formulas was found by the FullApprox method, as Chancellor's transformation yielded between 3 and 13 less satisfied clauses,  Nüßlein's transformation between 7 and 15 less satisfied clauses, and the random baseline  between 24 and 37 less satisfied clauses for each formula compared to the FullApprox method. As explained in Sec. \ref{fig:full_approx_eval}, the results of random guessing are as expected, as it is possible to satisfy approximately 7/8 of all clauses by randomly guessing \cite{haastad2001some}. This emphasizes an important point about the interpretation of the results: comparing the absolute number of satisfied clauses of solutions generated by different methods does not yield a good indication of the performance of different methods. For example, a perfect solver will find a solution that satisfies all the clauses, while randomly guessing satisfies  \emph{only} 12.5\% (=1/8) fewer clauses. We thus argue that the number of clauses satisfied by the random guessing method should be the baseline for each formula.
Using the number of satisfied clauses found by the random guessing method for each formula as the baseline, detailed data analysis showed that the FullApprox method yields between 12\% and 59\%  more satisfied clauses than Chancellor's method. A similar analysis showed that the FullApprox method yields between 28\% and 125\% more satisfied clauses than Nüßlein's method.

As the FullApprox approach is an approximation, it needs a significantly smaller amount of physical qubits on a quantum computer than the exact QUBO transformations by Chancellor and Nüßlein, to solve the same \msat problems. Table \ref{tab:embedding_sizes} shows the embedding sizes (i.e., the number of physical qubits needed) on D-Wave's Quantum Annealer \emph{Advantage\_System6.4} for the formulas and QUBO transformations of the previously conducted evaluation.

\begin{table}[!t]
\renewcommand{\arraystretch}{1.3}
\caption{Embedding sizes (i.e., number of physical qubits) on D-Wave's Advantage\_System6.4 (5,614 available qubits)}\label{tab:embedding_sizes}
\label{table_example}
\centering
\begin{tabular}{l||c|c|c}
\hline
\bfseries QUBO Transformation & \bfseries Min Size & \bfseries Max Size & Avg Size\\
\hline\hline
FullApprox & 2,203 & 2,819 & 2,385\\
\hline
Chancellor & 4,653 & 4,937 & 4,769 \\
\hline
Nüßlein & 4,640 &4,911 & 4,763\\
\end{tabular}
\end{table}
It is notable that our QUBO approximation approach \emph{FullApprox} only needed half of the number of physical qubits on D-Wave's Advantage\_System6.4 than the other approaches, while at the same time yielding better results (as shown in Fig. \ref{fig:full_approx_eval}). Note that the Advantage\_System6.4 possesses 5,612 physical qubits. Considering the embedding sizes in Table \ref{tab:embedding_sizes}, it is clear that when using exact approaches (like Chancellor's and Nüßlein's transformations), one cannot solve formulas with much more than 500 clauses on this machine. However, using the approximation approach, we have been able to embed formulas with 750 clauses on the Advantage\_System6.4, which used approximately 4,500 qubits. Thus, by using our systematic QUBO approximation method, one can embed 50\% larger formulas on the quantum annealer compared to non-approximated $(n+m)\times(n+m)$-dimensional QUBO transformations (like Chancellor's and Nüßlein's method).

\subsection{Scaling of the Systematic QUBO Approximation Method}
\noindent As seen in the previous section, our systematic QUBO approximation method can yield even better results than exact QUBO transformations. In this section, we want to address the scaling behavior of this approach. We will show that this approach does not only work when the formulas are small but can still yield good solutions as the size of the \msat problems grows.\\ 
For this section, we used the Balanced SAT \cite{spence2017balanced} method again to create 100 \tsat{} formulas each consisting of 10,000 clauses and 2,780 variables, which amounts to a ratio of clauses to variables of approximately 3.6. This is the empirically derived clauses-to-variable ratio for hard-to-solve \tsat{} instances of the Balanced SAT method. As hard formulas of this size can take multiple days (or even longer) to get solved, we do not know whether these formulas are satisfiable (i.e., the maximum number of satisfiable clauses is 10,000) or not (i.e., the maximum number of satisfiable clauses is smaller than 10,000). As we are only interested in the relative results between different approaches, this is not a limitation.\\
We transformed each of these 100 formulas to instances of QUBO using the previously described FullApprox method and Chancellor's and Nüßlein's transformation. Each QUBO will be solved on a classical (i.e., non-quantum) computer 100 times using D-Wave's tabu search implementation. Thus, for each formula and each QUBO transformation method, we have generated 100 samples. 

\begin{figure}[h]
\centering
\includegraphics[width=.49\textwidth]{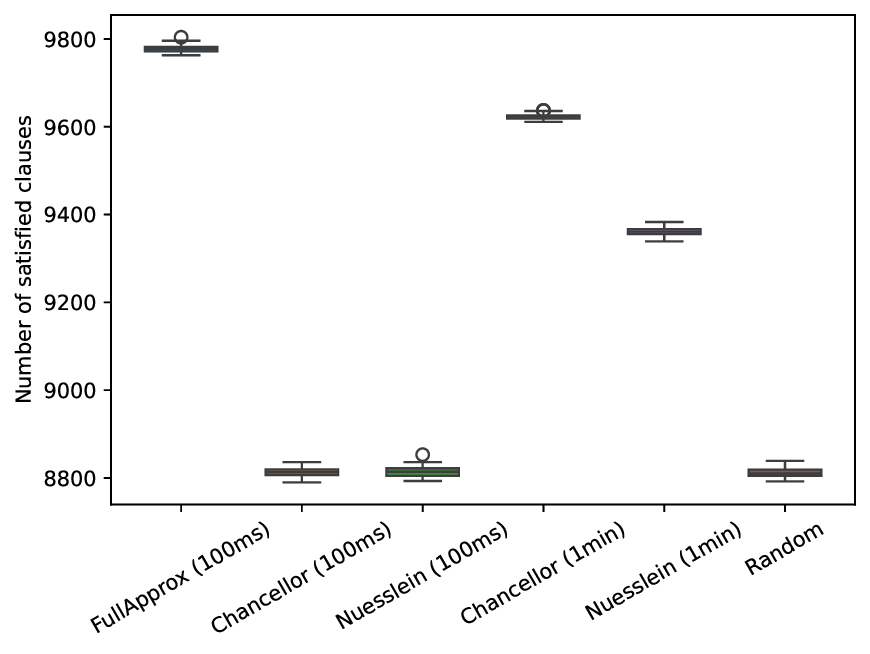}
\caption{Distribution of the best solutions found for each combination of a QUBO transformation and tabu solver time limit.}\label{fig:scaling_results}
\end{figure}

\noindent The results shown in Fig. \ref{fig:scaling_results} reveal two interesting insights. Firstly, we observe that randomly guessing solutions leads to approximately 8,800 satisfied clauses, which is close to the expected value of $7/8 \times 10,000$ clauses. Even though the size of the created formulas is 20 times larger than the size of the formulas created in Sec. \ref{sec:general_idea}, our full approximation method (FullApprox) still yields good solutions. For each formula, our FullApprox approach found solutions that satisfy approximately 98\% of all clauses. Secondly, to put these results into perspective, we compare the results of the FullApprox method to the results of Chancellor's and Nüßlein's method, also shown in Fig. \ref{fig:scaling_results}. We can see that for our dataset, the FullApprox method still yielded the best results.\\Observe, that for the FullApprox method, the tabu search method was given 100 ms of computation time. In contrast, the tabu search method was allowed 1 minute of computation time for the non-approximated exact QUBO transformations by Chancellor and Nüßlein. This is a consequence of D-Wave's implementation of the tabu search method. As the QUBOs of the non-approximated, exact QUBO transformations by Chancellor and Nüßlein are approximately 4.6 times as large as the QUBOs generated by our approximation method FullApprox, D-Wave's specific implementation of the tabu search method needs more time to generate reasonable results. We thus allowed the tabu search method 600 times more time to solve the 4.6 times larger QUBO matrices resulting from Chancellor's and Nüßlein's transformation. We want to emphasize that we included the specific time parameters for reproducibility in this paper. This is not meant as a performance comparison. By providing a different implementation of the tabu search method, one could reduce the solving times for larger QUBOs. The sole intention of this section, and especially the comparison of our approximated method with the non-approximated methods, is to show that our QUBO approximation for \msat problems still yields comparably good results, even if the problem size increases.

\section{Conclusion}\label{sec:conclusion}
Due to the missing error correction of contemporary quantum hardware, any quantum computer calculation can be considered an involuntary solution approximation. In this work, we thus studied whether it is possible to improve the solution quality when solving problems on D-Wave's quantum annealer Advantage\_System6.4 by using systematic QUBO approximation of \msat problems instead of exact, non-approximated QUBO representations as an input for the quantum annealer. For a \msat instance consisting of a \tsat{} formula with $n$ variables and $m$ clauses, our proposed QUBO approximation method yields $(n\times n)$-dimensional QUBO matrices, which is considerably smaller than the QUBO matrices that result from any exact, non-approximated QUBO transformation. The method is based on an adaption of the creation method of exact QUBO transformations that result in QUBO matrices of dimension $(n+m) \times (n+m)$. In an empirical evaluation, we demonstrated that our QUBO approximations can yield comparable or even better results than exact, non-approximated QUBO transformations when solved on D-Wave's quantum annealer Advantage\_System6.4. Furthermore, 50\% larger \msat instances can be solved on the quantum annealer due to our approximation method's reduced need for physical qubits. Additionally, we empirically showed that naive methods of creating QUBO approximations for \msat problems using $(n+m)\times(n+m)$-dimensional QUBO matrices as initial QUBOs are not effective.\\ 
In the future, we would like to explore the cause for the different behavior of the naive QUO approximation methods \emph{min pruning} and  \emph{random pruning}. Furthermore, it is also interesting to investigate whether systematic QUBO approximation can be beneficial for solving other classes of hard problems on D-Wave's quantum annealer.

\section*{Acknowledgement}
This paper was partially funded by the German Federal Ministry of Education and Research through the funding program ``quantum technologies --- from basic research to market'' (contract number: 13N16196).

\bibliographystyle{IEEEtran}
\bibliography{references}
\end{document}